\newtheorem{Problem}{Problem}
\newtheorem{Lemma}{Lemma}
\newtheorem{Theorem}{Theorem}
\newtheorem{Fact}{Fact}
\theoremstyle{definition}
\newcommand\F{\mathbb{F}}
\title{Unbounded quantum-classical separation in sample complexity for sphere center finding}
\begin{document}
\begin{frontmatter}
\author{Guanzhong Li$^{a}$}
\author{Lvzhou Li$^{a,b,}$\corref{mycorrespondingauthor}}
\address{$^a$Institute of Quantum Computing and Software, School of Computer and Engineering, Sun Yat-sen University, Guangzhou 510006, China}
\address{$^b$Quantum Science Center of Guangdong-Hong Kong-Macao Greater Bay Area (Guangdong),  Shenzhen 518045, China}
\cortext[mycorrespondingauthor]{lilvzh@mail.sysu.edu.cn}

\begin{abstract}
Fast quantum algorithms can solve important computational problems more efficiently than classical algorithms.
However, little is known about whether quantum computing can speed up solving geometric problems.
This article explores quantum advantages for the problem of finding the center of a sphere in vector spaces over finite fields, given samples of random points on the sphere.
We prove that any classical algorithm for this task requires approximately as many samples as the dimension of the vector space,  by a reduction to an old and basic algebraic result---Warning's second theorem. On the other hand, we propose a quantum algorithm based on quantum walks that needs only a constant number of samples to find the center.
Thus, an unbounded quantum advantage is revealed for a natural and intuitive geometric problem, which highlights the power of quantum computing in solving geometric problems.
\end{abstract}

\begin{keyword}
quantum algorithms, sample complexity, sphere center finding, quantum walks
\end{keyword}
\end{frontmatter}

\section{Introduction}
A primary goal of quantum computing is to identify problems for which quantum algorithms can offer speedups over their classical counterparts.
Grover's algorithm~\cite{Grover} offers a quadratic speedup for unstructured database search problem.
Shor's algorithm~\cite{shor} is exponentially faster than the best {known} classical algorithm in factoring integers.
Recent studies have also found exponential speedups in specific problems such as simulating the coupled oscillators~\cite{coupled2023}, traversing a decorated welded-tree graph using adiabatic quantum computation~\cite{adiabatic2021}, a graph property testing problem in the adjacency list model~\cite{Gproperty2020}, a classification problem for quantum kernel methods based on discrete logarithm problem~\cite{Liu2021}, a specific NP search problem based on random black-box function~\cite{Verifiable2022}, and so on.
However, to the best of our knowledge, there has been less study of exploring quantum speedups on   geometric problems.

In this article, we consider a natural and intuitive  geometric problem of finding the center of a sphere given samples of random points on the sphere.
To illustrate this problem, we begin with an intuitive example: if we want to pin down a circle on a paper, we will need $3$ points ($A, B, C$) which are not on the same line;
and if we want to fix an inflatable balloon, we will need $4$ points ($A,B,C,D$) which are not on the same plane.
As such, $n+1$ points are required to determine a sphere in $\mathbb{R}^n$.

To accommodate the discrete nature of qubit-based quantum computer, we will replace the continuous vector space $\mathbb{R}^n$ with $\F_p^n$, the vector space over a finite field $\F_p$ with $p$ being a prime number.
Define the length of a vector $x=(x_1,\cdots,x_n)\in \F_p^n$ by $l(x) := \sum_{j=1}^n x_j^2$. 
Then a sphere with radius $r\in \F_p \setminus\{0\}$ and center $t\in \F_p^n$ is denoted by $S_r+t = \{x+t \mid x\in S_r\}$, where
\begin{equation}\label{eq:S_r_def}
    S_r := \{ x\mid x\in\F_p^n, l(x)=r \}.
\end{equation}
The problem is then formally described as follows.
\begin{Problem}\label{prob:main}
    Find the unknown center $t\in\F_p^n$ of the sphere $S_r+t$ ($r\neq 0$ is given) with as few samples of points on $S_r+t$ as possible.
\end{Problem}

It is natural to ask if the problem is well-defined, or more precisely, could different centers $t \neq t'$ result in the same sphere $S_r +t = S_r +t'$?
For example, when $n=2$ and $r=1$, the spheres centered at $t=(0,0)$ and $t'=(1,1)$ respectively are the same: $S_1+t =S_1+t' = \{(1,0),(0,1)\}$.
However, we will show in Appendix~\ref{app:unique} that when the prime $p$ is odd and the dimension $n \geq 2$, different centers lead to different spheres, and thus Problem~\ref{prob:main} is well-defined.

In analogy to sampling a point uniformly random from $S_r+t$, each usage of the  state $\ket{S_r+t} := \frac{1}{\sqrt{s_r}} \sum_{v \in S_r+t} \ket{v}$, where $s_r := |S_r|$, is regarded as a quantum sample or qSample (a term coined by Aharonov and Ta-Shma~\cite{qSample}).
Such a way of comparing the number of classical samples versus coherent quantum samples has been a common measure of complexity known as sample complexity in quantum learning theory~\cite{survey_learning_theory}, and quantum samples can be more powerful than classical ones when they are drawn from the uniform distribution.
For example, it was shown by Bshouty and Jackson~\cite{BJ99} that Disjunctive Normal Form (DNF) formulas are learnable in polynomial time from quantum samples under the uniform distribution, while classically the best upper bound is quasi-polynomial time~\cite{Ver90}.
Another example is the learning of $k$-juntas (functions that depend on at most $k$ of the $n$ input bits), where Atıcı and Servedio~\cite{Atıcı2007} showed that $(\log n)$-juntas can be learned with quantum samples under the uniform distribution in $\mathrm{poly}(n)$ time, while classically the best algorithm runs in quasi-polynomial time~\cite{MOSSEL04}.
For an arbitrary and unknown distribution, Arunachalam and de Wolf~\cite{Arunachalam_18} showed that quantum sample complexity has exactly the same asymptotic scaling as classical learning complexity.

In this article, we obtain the following theorem.

\begin{Theorem}\label{thm:main}
    There is a quantum algorithm that solves Problem~\ref{prob:main} with bounded error, and uses $O(\log p)$ \footnote{The standard asymptotic notations $O(\cdot)$ and $\Omega(\cdot)$ are used.  We say the complexity is  $O(f(n))$ (resp. $\Omega(f(n))$), if for large enough $n$, it is at most (resp. at least) $cf(n)$ for some constant $c$.} samples.
    However, any classical algorithm  solving Problem~\ref{prob:main} with bounded error requires $\Omega(n)$ samples.
\end{Theorem}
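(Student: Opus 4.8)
The plan is to handle the quantum upper bound and the classical lower bound with completely different tools: Fourier analysis over $\F_q^n$ for the former, and a reduction to Warning's second theorem for the latter.

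\emph{Quantum algorithm.} Let $\omega=e^{2\pi i/q}$ and, for $y\in\F_q^n$, set $\widehat{S_r}(y)=\sum_{x\in S_r}\omega^{\langle x,y\rangle}$. First I apply the quantum Fourier transform of $\F_q^n$ to a single sample $\ket{S_r+t}$, which converts the hidden translation into hidden phases: the result is proportional to $\sum_{y}\widehat{S_r}(y)\,\omega^{\langle t,y\rangle}\ket{y}$. Completing the square in $l(\cdot)$ expresses $\widehat{S_r}(y)$ as a Gauss/Kloosterman/Salié sum that depends on $y$ only through $l(y)$, is explicitly computable, equals $s_r\approx q^{n-1}$ at $y=0$, has magnitude $O(q^{(n-1)/2})$ for $y\neq0$, and in fact has magnitude $\Theta(q^{(n-1)/2})$ on a constant fraction of the nonzero $y$'s (and is sometimes exactly $0$). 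The $y=0$ mode carries weight only $s_r/q^n\approx 1/q$, so I project it out at $O(1)$ cost. On the surviving modes I use the closed form of $\widehat{S_r}(l(y))$ to apply---via the continuous-time quantum walk generated by the Cayley operator $A=\sum_{v\in S_r}U_v$ with $U_v\colon\ket x\mapsto\ket{x+v}$, which satisfies $A\ket t=\sqrt{s_r}\,\ket{S_r+t}$ and is diagonalized by the characters $\ket{\chi_y}$ with eigenvalues $\widehat{S_r}(l(y))$---an eigenvalue transformation that rescales each amplitude towards a common value and removes its known phase. Since on the retained bulk the magnitudes $|\widehat{S_r}(l(y))|$ coincide up to a constant factor, the postselection that accompanies this rescaling succeeds with constant probability.

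The state produced is proportional to $\sum_{y\in B}\omega^{\langle t,y\rangle}\ket{y}$ for a set $B\subseteq\F_q^n$ of constant density, and the inverse Fourier transform turns this into a state with amplitude proportional to $\widehat{\mathbf{1}_B}(v-t)$ on $\ket v$. Here $\widehat{\mathbf{1}_B}(0)=|B|=\Theta(q^n)$, while for $u\neq0$ the quantity $\widehat{\mathbf{1}_B}(u)$ is a sum of Kloosterman/Salié sums that, after swapping the order of summation, collapses to something of size $O(q^{n/2})$; hence a computational-basis measurement returns the true center $v=t$ with probability $\Omega(1)$ and any fixed wrong value with probability $O(q^{-n})$. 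One sample thus yields $t$ with constant probability, and repeating $O(\log q)$ times and outputting the most frequent result---wrong guesses essentially never collide---amplifies the success probability as claimed. The technical heart is the spectral control of the bulk: the explicit evaluation and Weil-type bounds for $\widehat{S_r}$, a clean description of the retained set $B$, and the estimate $|\widehat{\mathbf{1}_B}(u)|=O(q^{n/2})$ off the origin.

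\emph{Classical lower bound.} Given $k$ samples $v_1,\dots,v_k$, a point $t'$ is a center consistent with them if and only if $l(v_i-t')=r$ for every $i$. Writing $t'=t+w$ and $u_i=v_i-t\in S_r$ and using $l(u_i-w)=l(u_i)-2\langle u_i,w\rangle+l(w)$ with $l(u_i)=r$, this becomes the system of $k$ degree-two equations $f_i(w):=l(w)-2\langle u_i,w\rangle=0$ in the $n$ variables $w\in\F_q^n$, and it holds at $w=0$. When $k<n/2$ we have $\sum_i\deg f_i=2k<n$, so Warning's second theorem forces the common zero set $C$ to satisfy $|C|\ge q^{n-2k}\ge q\ge 3$ (even Chevalley--Warning already gives $|C|\ge q$). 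Hence for $k\le(n-1)/2$ the observed samples are consistent with at least $q$ distinct centers. Finally, let the hidden center be uniformly random; then the posterior of $t$ given $v_1,\dots,v_k$ is uniform on $C$, so any estimator computed from the samples outputs the true center with probability at most $1/|C|\le 1/3$, contradicting bounded-error success. Therefore $k=\Omega(n)$. The only genuine step here is recognizing the admissible centers as a low-degree variety to which Warning's bound applies; the remainder is a routine averaging argument, and I expect the quantum construction's Fourier estimates, rather than this half, to be the main work.
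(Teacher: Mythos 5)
Your classical lower bound is essentially the paper's own argument: you encode consistency of $k$ observed points with a candidate center as a system of $k$ quadratic equations over $\F_q$ that has the zero solution, and invoke Warning's second theorem to get at least $q^{n-2k}$ consistent centers, hence success probability at most $1/q\le 1/3$ once $k\le(n-1)/2$ (the paper takes $k<n/3$ and gets $q^{-n/3}$; the difference is immaterial, and your posterior-uniformity step is fine because all radius-$r$ spheres have the same cardinality $s_r$). Your quantum half, however, is a genuinely different route. The paper never touches character sums: it runs $e^{i\bar A t_0}$ for the very short time $t_0=1/\sqrt{q^{n-1}\log q}$ and uses a first-order Taylor expansion plus the spectral bound $\|\bar A\|\le 2\sqrt{q^{n-1}}$, together with the $(q+1)$-dimensional invariant subspace spanned by $\ket t$ and the sphere states, to get amplitude $\Omega(1/\sqrt{\log q})$ on $\ket t$ and $O(q^{-(n-1)/2})$ elsewhere, i.e.\ success $\Omega(1/\log q)$ per sample. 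You instead diagonalize $A$ by the QFT, use that the eigenvalue attached to $\ket{\chi_y}$ depends only on $l(y)$ and obeys square-root cancellation, flatten a constant-density bulk of Fourier modes by postselection, and invert the QFT; this buys per-sample success $\Omega(1)$ instead of $\Omega(1/\log q)$, at the price of explicit Gauss/Sali\'e/Kloosterman evaluations and a postselected non-unitary filter. (The paper advertises its algorithm precisely as not being of the QFT type; your sketch indicates the QFT route also works.)

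Three points in your quantum half need tightening, though none looks fatal. First, a CTQW by itself only imprints phases $e^{i\lambda t}$ and cannot rescale magnitudes; the flattening must be an ancilla-controlled rotation (or LCU/QSVT) keyed on $l(y)$ computed into a register after the QFT---easy here exactly because the eigenvalue is an explicitly computable function of $l(y)$, but ``via the CTQW'' as written is not an implementation. Second, the bound $|\widehat{\mathbf 1_B}(u)|=O(q^{n/2})$ for $u\ne 0$ is too strong: $B$ is a union of up to $q$ spheres, and term-by-term Weil bounds (or swapping summation plus Cauchy--Schwarz) give only $O(q^{(n+1)/2})$; fortunately each fixed wrong output is still exponentially unlikely, which is all your mode-of-repetitions argument needs, since the $\Omega(1)$ amplitude at $v=t$ already follows from $|B|=\Theta(q^n)$. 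Third, the constant-density bulk claim is asserted rather than proved; it does hold, but you should derive it from Parseval, $\sum_{y\ne 0}|\widehat{S_r}(y)|^2=s_r(q^n-s_r)=\Theta(q^{2n-1})$, combined with the uniform bound $|\widehat{S_r}(y)|=O(q^{(n-1)/2})$ for $y\ne0$, rather than from equidistribution heuristics, since $q$ may be a small constant. With these repairs your proposal gives a valid, and in fact slightly stronger, proof of the quantum upper bound.
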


Note that when $p$ is fixed to be a constant, the quantum algorithm needs constant $O(1)$ samples, while any classical algorithm requires $\Omega(n)$ samples.
Thus, we obtain an unbounded quantum advantage.
{It is noteworthy that we do not claim a quantum time advantage, nor do we claim to be the first to prove an unbounded quantum-classical separation in sample complexity.}

To prove the classical lower bound  $\Omega(n)$,
we find unexpectedly that it can be reduced to a basic algebraic theorem proved around 1935 by Warning~\cite{warning1935}, which gives a lower bound on the number of zeros of multi-variable polynomials over finite fields.
We then design the quantum algorithm based on  continuous-time quantum walks (CTQW) on the Euclidean graph on $\F_p^n$, where two points $x,y$ are connected if and only if $l(x-y) = r$.
Our algorithm is inspired by earlier work  by Childs, Schulman, and Vazirani~\cite{nonlinear}, but it requires a fine-grained analysis of the success probability so as to adapt to our problem.

Quantum walks are an analogy to classical random walks, and have become a widely adopted paradigm to design quantum algorithms for various problems, such as spatial search~\cite{spatial_2004,coins2005,unified,robust,universal}, element distinctness~\cite{Ambainis07,eedp}, matrix product verification~\cite{BuhrmanS06}, triangle finding~\cite{MagniezSS07,LI2025105295}, group commutativity~\cite{MagniezN07}, the welded-tree problem~\cite{CCD03,multi2023,welded2024}, and the hidden flat of centers problem~\cite{nonlinear}.
There are two types of quantum walks: the CTQW and the discrete-time quantum walk (DTQW).
CTQW is relatively simple, and mainly involves simulating a Hamiltonian $H$ that encodes the structure of the graph.
DTQW is more diverse, ranging from the earliest and simplest coined quantum walk~\cite{AmbainisBNVW01,AmbainisKV01} to various Markov chain based frameworks~\cite{Szegedy_03,MagniezNRS11,belovs2013quantum,KroviMOR16,unified}.

\section{Classical lower bound}
In the introduction section we have intuitively shown that $n+1$ points are required to determine a sphere in $\mathbb{R}^n$.
Here we will rigorously prove that any randomized algorithm requires $\Omega(n)$ samples of points on the sphere $S_r+t$ to determine the center $t$ with high probability.

Let $\mathcal{Y} := \F_p^n$ be the \textit{solution space} that contains all the possible centers $t$.
Suppose the maximum number of sample points on the sphere $S_r+t$ that a randomized algorithm can get in Problem~\ref{prob:main} is $h$, which will be used in Theorem~\ref{thm:classical_lower} to upper bound the success probability.
Let $\mathcal{X} := (\F_p^n)^h$ be the \textit{sample space}, so that for any possibly obtained sample combination $x = (x_1,\cdots,x_h) \in (S_r+t)^h$ where $t \in \mathcal{Y}$ is arbitrarily fixed, we have $x \in \mathcal{X}$.
A randomized algorithm for Problem~\ref{prob:main} using at most $h$ sample points can be characterized by a function:
\begin{equation}\label{eq:rho}
	\rho: \mathcal{X} \to \mathcal{D}(\mathcal{Y}),
\end{equation}
where $\mathcal{D}(\mathcal{Y})$ consists of all the probability distributions on the set $\mathcal{Y}$.
In other words, for a possibly obtained sample combination $x\in \mathcal{X}$, the randomized algorithm $\rho$ outputs an answer $y\in \mathcal{Y}$ with probability $\rho(x,y) \in [0,1]$.

Denote by $X \sim U((S_r+y)^h)$ the fact that the random variable $X$ obeys the uniform distribution on $(S_r+y)^h$.
We can now present Theorem~\ref{thm:classical_lower} that implies the classical lower bound.
\begin{Theorem}\label{thm:classical_lower}
Let $c$ be a constant such that $c\in(0,1/2)$.
Assume that $h \leq cn$, then for any randomized algorithm $\rho: \mathcal{X} \to \mathcal{D}(\mathcal{Y})$ using at most $h$ sample points, where $\mathcal{X} = (\F_p^n)^h$ and $\mathcal{Y} = \F_p^n$, we have:
\begin{equation}\label{eq:success_pr}
	\min_{y\in \mathcal{Y}} \mathop{\mathbb{E}}\limits_{X\sim U((S_r+y)^h)} [\rho(X,y)]
	\leq 1/p^{(1-2c)n}.
\end{equation}
\end{Theorem}
If we let $c=1/3$ in Theorem~\ref{thm:classical_lower}, then a classical algorithm $\rho$ that uses less than $h \leq n/3$ samples on $S_r+y$ can only succeed with probability at most $1/p^{n/3}$ in the worst case,
which is exponentially small with $n$.
Therefore, if a classical algorithm wants to solve Problem~\ref{prob:main} with constant probability, $\Omega(n)$ samples are needed.

The rest of this section is devoted to the proof of Theorem~\ref{thm:classical_lower}. An outline of our proof is shown below:
\begin{enumerate}

    \item We first present Yao's minimax principle specific to randomized algorithms for sample problems (Lemma~\ref{thm:minimax}), which converts randomness in randomized algorithms to randomness in the solution space $\mathcal{Y}$.

    \item We then show a lower bound of deterministic strategies, using Warning's second theorem (Lemma~\ref{lem:warning}).
    Specifically, we will show in Lemma~\ref{lem:m_x} that even after seeing $cn$ sample points on the sphere, there are still exponentially remaining centers consistent with these sample points.

    \item Finally, we finish the proof of Theorem~\ref{thm:classical_lower} by combining the above two results together.
\end{enumerate}

\subsection{Yao's minimax principle specific to randomized algorithms for sample problems}

Recall from Problem~\ref{prob:main} that given a center $y\in \mathcal{Y}$, the sample combination $x$ are drawn from the uniform distribution $U((S_r+y)^h) \in \mathcal{D}(\mathcal{X})$.
More generally, the following function characterizes a sample problem:
\begin{equation}\label{eq:sigma}
	\sigma: \mathcal{Y} \to \mathcal{D}(\mathcal{X}).
\end{equation}
A deterministic strategy can be characterized by a function $f: \mathcal{X} \to \mathcal{Y}$, which can also be regarded as a vector $f \in \mathcal{Y}^{|\mathcal{X}|}$ since $\mathcal{X}$ and $\mathcal{Y}$ are both finite sets.


We can now present Lemma~\ref{thm:minimax}, which can be seen as Yao's minimax principle~\cite{Yao1977} specific to randomized algorithms for sample problems.
The principle allows us to convert randomness within the randomized algorithm to randomness in the solution space $\mathcal{Y} = \F_p^n$, or more precisely, to construct a probability distribution $\mu \in \mathcal{D}(\mathcal{Y})$ that is difficult for any deterministic strategy.

\begin{Lemma}\label{thm:minimax}
For a sample problem $\sigma: \mathcal{Y} \to \mathcal{D}(\mathcal{X})$ with corresponding solution space $\mathcal{Y}$ and sample space $\mathcal{X}$, we have:
\begin{equation}\label{eq:minimax_sample}
	\min_{\mu \in \mathcal{D}(\mathcal{Y})}
	\max_{f \in \mathcal{Y}^{|\mathcal{X}|}}
	\mathop{\mathbb{E}}\limits_{Y\sim \mu}
	[\Pr_{X \sim \sigma(Y)} (f(X)=Y)]
	= \max_{\rho\in [\mathcal{X} \to \mathcal{D}(\mathcal{Y})]}
	\min_{y\in \mathcal{Y}}
    \mathop{\mathbb{E}}\limits_{X\sim \sigma(y)} [\rho(X,y)].
\end{equation}
\end{Lemma}
Further explanation and the usage of Lemma~\ref{thm:minimax} are as follows: In order to obtain an upper bound on the worst-case success probability $\min_{y\in \mathcal{Y}} \mathop{\mathbb{E}}\limits_{X\sim \sigma(y)} [\rho(X,y)]$ of any random algorithm $\rho\in [\mathcal{X} \to \mathcal{D}(\mathcal{Y})]$, according to the outer $\min_{\mu \in \mathcal{D}(\mathcal{Y})}$ and $\max_{\rho\in [\mathcal{X} \to \mathcal{D}(\mathcal{Y})]}$ in Eq.~\eqref{eq:success_pr}, we only need to construct a probability distribution $\mu \in \mathcal{D}(\mathcal{Y})$ on the solution space, and find the upper bound on the expected success probability of any deterministic strategies $f \in \mathcal{Y}^{|\mathcal{X}|}$ under this probability distribution, i.e. $\max_{f \in \mathcal{Y}^{|\mathcal{X}|}} \mathop{\mathbb{E}}\limits_{Y\sim \mu} [\Pr_{X \sim \sigma(Y)} (f(X)=Y)]$.
For completeness, the proof of Lemma~\ref{thm:minimax} is presented in Appendix~\ref{app:minimax}.

\subsection{Lower bound on sample complexities of deterministic strategies}

Recall that a deterministic strategy for Problem~\ref{prob:main} with sample complexity $h$ can be characterized by a function $f: (\F_p^n)^h \to \F_p^n$ that outputs an answer $y \in \F_p^n$ when given a sample combination $x \in \bigcup_{y\in \F_p^n} (S_r+y)^h =: \mathcal{S} \subsetneq (\F_p^n)^h$, where $\mathcal{S} \subsetneq (\F_p^n)^h$ follows from the fact that the $h$ sample points cannot be taken from different spheres.

We now present Lemma~\ref{lem:m_x}, which implies that even after a deterministic strategy $f$ has seen $n/3$ sample points on the sphere, there are still $p^{n/3}$ exponentially remaining centers consistent with these sample points.
Therefore, for a deterministic strategy $f$ to succeed with certainty, its sample complexity needs to be greater than $n/3$.

\begin{Lemma}\label{lem:m_x}
Assume that $h \leq cn$, where constant $c\in(0,1/2)$.
For any sample combination $x \in \mathcal{S} = \bigcup_{y\in \F_p^n} (S_r+y)^h$, let $m_x$ denote the number of {$y'\in \F_p^n $} such that $x \in (S_r+y')^h$.
Then it holds that $m_x \geq p^{(1-2c)n}$.
\end{Lemma}

To prove Lemma~\ref{lem:m_x}, we will use Lemma~\ref{lem:warning} in the following, also known as Warning's second theorem~\cite{warning2018,Clark2017,Clark2018}, attributed to Ewald Warning~\cite{warning1935}.

\begin{Lemma}[Warning's second theorem]\label{lem:warning}
    Suppose $f_1,f_2,\cdots,f_s \in \F_p[y_1,\cdots,y_n]$ are multivariate polynomials over $\F_p$ with $d_i =\mathrm{deg}(f_i)$.
    Let $d:= d_1 + \cdots +d_s$ be the total degree and $V := \{ y=(y_1,\cdots,y_n)\in \F_p^n \mid f_i(y)=0, \forall\, 1\leq i \leq s \}$ be the set of common zeros of $f_i$.
    Assume $n>d$ and $V\neq \emptyset$.
    Then $|V| \geq p^{n-d}$.
\end{Lemma}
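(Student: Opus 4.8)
\emph{Proof plan.} I would deduce Lemma~\ref{lem:warning} from the known minimum Hamming distance of generalized Reed--Muller codes. First, one may assume each $d_i\ge 1$: an identically zero $f_i$ imposes no constraint and is discarded, a nonzero constant $f_i$ cannot occur since $V\neq\emptyset$, and if no constraints remain then $V=\F_q^n$, $d=0$, and $|V|=q^{n}=q^{n-d}$. Next I form the polynomial
\begin{equation}
    P(x):=\prod_{i=1}^{s}\bigl(1-f_i(x)^{q-1}\bigr),
\end{equation}
whose total degree is at most $(q-1)d$. Since $a^{q-1}=1$ for $a\in\F_q\setminus\{0\}$ and $0^{q-1}=0$, the function $x\mapsto P(x)$ is precisely the indicator of $V$, equal to $1$ on $V$ and to $0$ elsewhere. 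Hence $\bigl(P(a)\bigr)_{a\in\F_q^n}$ is a codeword of the generalized Reed--Muller code $\mathcal R_q\bigl((q-1)d,\,n\bigr)$, its Hamming weight equals $|V|$, and it is nonzero exactly because $V\neq\emptyset$.

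It then suffices to quote the classical formula of Kasami, Lin and Peterson for the minimum distance of $\mathcal R_q(\nu,n)$: writing $\nu=a(q-1)+b$ with $0\le b\le q-2$ and $0\le a\le n-1$, every nonzero codeword has Hamming weight at least $(q-b)\,q^{\,n-1-a}$. Our order is $\nu=(q-1)d$, so $a=d$ (admissible because $d<n$) and $b=0$, which gives the lower bound $q\cdot q^{\,n-1-d}=q^{\,n-d}$. Applying this to $P$ yields $|V|\ge q^{\,n-d}$, as claimed.

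For completeness I would also sketch a self-contained alternative avoiding coding theory, by induction on $n$: partition $\F_q^n$ into the $q$ parallel hyperplanes $\{\ell=c\}$, $c\in\F_q$, for a suitable linear form $\ell$; each slice carries a polynomial system of total degree at most $d$ in $n-1$ variables with $n-1>d$, so the inductive hypothesis bounds every nonempty slice below by $q^{\,n-1-d}$, while the base case $n=d+1$ follows from the Chevalley--Warning theorem (which, $q$ being prime here, gives $q\mid|V|$ and hence $|V|\ge q$). The main obstacle in this route --- and the genuine content of Warning's original argument --- is the inductive step: a direction $\ell$ need not meet $V$ in all $q$ slices, so a naive count yields only $|V|\ge(q-1)\,q^{\,n-1-d}$, short of the target by one factor of $q$; closing this gap requires handling the empty slices carefully in combination with the divisibilities $q\mid|V|$ and $q\mid|V\cap\{\ell=c\}|$. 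Since the Reed--Muller route packages exactly that difficulty into a standard (if nontrivial) minimum-distance computation, I would take it as the main proof and relegate the inductive argument to a remark.
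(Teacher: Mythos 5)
The paper does not actually prove Lemma~\ref{lem:warning}: it is invoked as a known classical result, with citations to Warning's original 1935 paper and to modern expository treatments, so there is no in-paper argument for your proof to be compared against. Your main route is nevertheless correct and is one of the standard derivations. The reductions at the start are sound (identically zero $f_i$ can be dropped, a nonzero constant $f_i$ is excluded by $V\neq\emptyset$, and the empty system gives $|V|=q^n$), the polynomial $P=\prod_i\bigl(1-f_i^{q-1}\bigr)$ is indeed the indicator of $V$ of total degree at most $(q-1)d$, and since $d\le n-1$ the order $(q-1)d$ lies in the admissible range $[0,n(q-1)]$, so the Kasami--Lin--Peterson formula with $\nu=(q-1)d$, $a=d$, $b=0$ gives minimum weight $q^{n-d}$ and hence $|V|\ge q^{n-d}$ for the nonzero codeword $P$. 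The trade-off is that all of the real work is outsourced to the minimum-distance theorem for generalized Reed--Muller codes, which is itself a nontrivial classical result of essentially the same depth as Warning's original argument; that is acceptable here, since the paper treats the lemma as citable background in exactly the same spirit. Your remark on the elementary inductive alternative is also accurate: the naive slicing only yields $(q-1)q^{n-1-d}$, and closing that factor-of-$q/(q-1)$ gap is precisely the delicate part of Warning's proof; and your use of Chevalley--Warning in the base case is fine in this paper's setting because $q$ is assumed prime (for general prime powers it only gives divisibility by the characteristic, and one would have to argue the base case differently, e.g.\ via Ax's divisibility theorem).
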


\begin{proof}[Proof of Lemma~\ref{lem:m_x}]
Write {the} sample combination $x \in (S_r+y)^h$ as a set $X+y$ of points on the sphere $S_r+y$, where $X := \{ (x_{i1},\cdots,x_{in}) \mid i=1,\ldots,h \} \subseteq S_r$.
Assume that there are $m_x$ many possible centers {$ y' \in \F_p^n $} such that $X+y \subseteq S_r+y'$, i.e. consistent with the sample points.
Denote $\bar{y} := y'-y$, then the inclusion $X +y \subseteq S_r +y'$ becomes $X\subseteq S_r+\bar{y}$, which results in the following system of polynomial equations over finite fields about variables $\bar{y}=(\bar{y}_1,\cdots,\bar{y}_n)$:
\begin{equation}
    f_i := \sum_{j=1}^{n} (x_{ij} -\bar{y}_j)^2 -r=0, \ 1\leq i\leq h.
\end{equation}
Since $d_i =\mathrm{deg}(f_i) = 2$ for all $i$, and $\bar{y} = \vec{0}$ is a solution (recall that $X \subseteq S_r$), the above equations have at least $p^{n-2h} \geq p^{(1-2c)n}$ common roots $\bar{y}$ by Lemma~\ref{lem:warning}.
Thus there are $m_x \geq p^{(1-2c)n}$ possible centers $y'=\bar{y}+y$ such that $X +y \subseteq S_r +y'$.
\end{proof}

\subsection{Finishing the proof of Theorem~\ref{thm:classical_lower}}

To prove that $\min_{y\in \F_p^n} \mathbb{E}_{X\sim U((S_r+y)^h)} [\rho(X,y)] \leq p^{(1-2c)n}$ in Theorem~\ref{thm:classical_lower}, we simply let $\mu \in \mathcal{D}(\F_p^n)$ be the uniform distribution, i.e. $\mu = U(\F_p^n)$, in Eq.~\eqref{eq:minimax_sample} of Lemma~\ref{thm:minimax}.
Thus it suffices to show the following inequality:
\begin{equation}\label{eq:E_Y_U}
	\mathbb{E}_{Y\sim U(\F_p^n)} [\Pr_{X \sim U((S_r+Y)^h)} (f(X)=Y)]
	\leq 1/p^{(1-2c)n}.
\end{equation}
Let $\delta(A) \in \{1,0\}$ be the boolean function that indicates whether event $A$ happens or not.
Denote by $\mathcal{S} = \bigcup_{y\in \F_p^n} (S_r+y)^h$ the set of all possible sample combinations as in Lemma~\ref{lem:m_x}.
Expanding the LHS of Eq.~\eqref{eq:E_Y_U}, we have:
\begin{align}
	&\frac{1}{p^n} \sum_{y\in \F_p^n}
	\frac{1}{s_r^h} \sum_{x \in (S_r+y)^h}
	\delta(f(x)=y) \label{eq:double_sum_line_1} \\
	&=\frac{1}{p^n s_r^h} \sum_{x\in \mathcal{S}}
	\sum_{y'\in \F_p^n \,{\rm s.t.}\, x \in (S_r+y')^h} \delta(f(x)=y') \label{eq:double_sum_line_2}\\
	&\leq \frac{|\mathcal{S}| }{p^n s_r^h} \label{eq:double_sum_line_3}\\
	&\leq 1/p^{(1-2c)n}, \label{eq:double_sum_line_4}
\end{align}
where the inequality in Eq.~\eqref{eq:double_sum_line_3} uses the fact that $f(x)$ as a function of $x \in (\F_p^n)^h$ can only take one value in $\F_p^n$, possibly contained in the $m_x$ many {$y' \in \F_p^n $} such that $x \in (S_r+y')^h$.
The inequality in Eq.~\eqref{eq:double_sum_line_4} follows from the fact that $p^n s_r^h = \sum_{x\in \mathcal{S}} m_x \geq |\mathcal{S}| p^{(1-2c)n}$, where the inequality follows from Lemma~\ref{lem:m_x}.

\section{Quantum algorithm}
The quantum algorithm for Problem~\ref{prob:main} is concise and consists of only four steps as shown below.
Our algorithm is inspired by earlier work  by Childs, Schulman, and Vazirani~\cite{nonlinear}, but it requires a fine-grained analysis of the success probability so as to adapt to our problem.
The main idea is to use CTQW on the Euclidean graph $G$  to move amplitude from the sphere to its center.
The Euclidean graph $G$ has vertex set $\F_p^n$, and two vertices $x,x' \in F_p^n$ are connected by an undirected edge if and only if $l(x-x') = r$.

The Hamiltonian $\bar{A}$ of the CTQW on the Euclidean graph $G$ is approximately the adjacency matrix $A$ of the Euclidean graph $G$, but with $A$'s largest eigenvalue $\lambda_0 = s_r$ replaced by $0$.
Specifically, $\bar{A} := A -\lambda_0 \ket{\widetilde{0}} \bra{\widetilde{0}}$, where $\ket{\tilde{0}}$ is the corresponding eigenvector of eigenvalue $\lambda_0$ and it is the equal superposition of all points in $\F_p^n$ (see~\cite[Proposition 2]{analogue} for the spectral decomposition of $A$).
Note that $A$ is symmetric since the graph $G$ is undirected, so $\bar{A}$ is a valid Hamiltonian. We will see later from numerical simulation in Appendix~\ref{sec:numerical} that $A$ itself as the Hamiltonian is good enough.

\begin{enumerate}
    \item Prepare the quantum sample $\ket{S_r+t} = \frac{1}{\sqrt{s_r}} \sum_{v \in S_r+t} \ket{v}$.

    \item Apply a CTQW $e^{i\bar{A} T_0}$ to $\ket{S_r+t}$ for time $T_0 = 1/\sqrt{p^{n-1} \log p}$, and then measure in the computational basis obtaining a point in $\F_p^n$.    

    \item Repeat the above two steps for $O(\log p)$ times.
    
    \item Output the point with the highest frequency.
\end{enumerate}

The following lemma lower bounds the success probability of step~2.
It can be seen as a fine-grained version of~\cite[Lemma 4]{nonlinear}.
\begin{Lemma}\label{lem:CTQW_prob}
    The final state $\ket{\psi_{T_0}} := e^{i\bar{A}T_0} \ket{S_r+t}$ of CTQW on the Euclidean graph $G$ has the following properties:
    \begin{equation}
        \left|\braket{x | \psi_{T_0}}\right|
    \begin{cases}
        \leq O(p^{-(n-1)/2}) &\mathrm{for}\, x\neq t, \\
        \geq \frac{h(p,n)}{\sqrt{\log p}} &\mathrm{for}\, x=t,
    \end{cases} 
    \end{equation}
    where the function $h(p,n)$ is monotonically increasing in both $p$ and $n$, and $h(127,8) \geq 0.0016$.
\end{Lemma}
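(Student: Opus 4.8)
The plan is to diagonalise $\bar A$ in the Fourier (character) basis of $\F_q^n$ and then treat the two regimes by entirely different means: a first--order (short time $t_0$) expansion of $e^{i\bar A t_0}$ for the lower bound at $x=t$, and an $O_n(\F_q)$--symmetry argument, rather than any exponential--sum estimate, for the upper bound at $x\neq t$. First I would record the spectral data. With $\omega=e^{2\pi i/q}$ and $\ket{\hat y}=q^{-n/2}\sum_{x}\omega^{x\cdot y}\ket x$, one has $A\ket{\hat y}=\lambda_y\ket{\hat y}$ with $\lambda_y=\sum_{v\in S_r}\omega^{v\cdot y}$; this Fourier coefficient of the sphere is a Gauss sum that collapses to a Kloosterman (for $n$ even) or Sali\'{e} (for $n$ odd) sum, so Weil's bound gives $|\lambda_y|\le 2q^{(n-1)/2}$ for $y\neq0$, while $\lambda_0=s_r=q^{n-1}+O(q^{(n-1)/2})$ and $\ket{\hat 0}=\ket{\widetilde 0}$ (cf.\ \cite[Proposition 2]{analogue}). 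Thus $\bar A$ has the same eigenvectors with $\bar\lambda_0=0$, $\bar\lambda_y=\lambda_y$ for $y\neq0$, and $\|\bar A\|\le 2q^{(n-1)/2}$. The single combinatorial input I need is that the centre $t$ is adjacent in $G$ to every point of its own sphere (each $u+t$ with $u\in S_r$ has $l(u)=r$), so $\braket{t|A|S_r+t}=s_r/\sqrt{s_r}=\sqrt{s_r}$ and hence $\braket{t|\bar A|S_r+t}=\sqrt{s_r}\,(1-s_r/q^n)$.

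For $x=t$, write $e^{i\bar A t_0}=I+it_0\bar A+R$ with $R=\sum_{m\ge2}(it_0)^m\bar A^m/m!$. The zeroth--order term contributes nothing since $0\notin S_r$, and the first--order term contributes $it_0\braket{t|\bar A|S_r+t}$, of modulus $\tfrac1{\sqrt{\log q}}\sqrt{s_r/q^{n-1}}\,(1-s_r/q^n)$ because $t_0=1/\sqrt{q^{n-1}\log q}$. For the tail, Parseval in the Fourier basis gives $\|R\ket{S_r+t}\|^2=\sum_{y\neq0}\bigl|e^{i\lambda_y t_0}-1-i\lambda_y t_0\bigr|^2\,|\braket{\hat y|S_r+t}|^2$; using $|e^{i\theta}-1-i\theta|\le \theta^2/2$, the identity $|\braket{\hat y|S_r+t}|^2=\lambda_y^2/(q^n s_r)$, and $\sum_{y\neq0}\lambda_y^6\le(2q^{(n-1)/2})^4\sum_{y\neq0}\lambda_y^2\le 16q^{2n-2}q^n s_r$, this is at most $4t_0^4 q^{2n-2}=4/(\log q)^2$, i.e.\ $\|R\ket{S_r+t}\|\le 2/\log q$. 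Combining, $|\braket{t|\psi_{t_0}}|\ge \tfrac1{\sqrt{\log q}}\bigl[\sqrt{s_r/q^{n-1}}\,(1-s_r/q^n)-2/\sqrt{\log q}\bigr]$; substituting $s_r=q^{n-1}+O(q^{(n-1)/2})$ turns the bracket into a closed form $h(q,n)$ that is manifestly increasing in both $q$ and $n$, and direct substitution of $(q,n)=(127,8)$ gives the stated numerical bound.

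For $x\neq t$ I would bypass the corresponding oscillatory sum entirely. The affine isometries $x\mapsto g(x-t)+t$ with $g\in O_n(\F_q)$ are automorphisms of the Euclidean graph $G$ and fix both $\ket{\widetilde 0}$ and $\ket{S_r+t}$, hence commute with $\bar A$ and so with $e^{i\bar A t_0}$; therefore $\ket{\psi_{t_0}}$ is invariant under them, and $|\braket{x|\psi_{t_0}}|$ depends only on the $O_n(\F_q)$--orbit of $x-t$. By Witt's theorem these orbits are $\{0\}$ together with the shells $\{v:l(v)=\ell\}$ (with $0$ removed when $\ell=0$), each of which has size at least $\tfrac12 q^{n-1}$; since $\ket{\psi_{t_0}}$ is a unit vector, summing $|\braket{x|\psi_{t_0}}|^2$ over one such orbit forces $|\braket{x|\psi_{t_0}}|^2\le 2/q^{n-1}$ for every $x\neq t$.

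The genuine difficulty is exactly this off--centre estimate: a direct attack stalls because $\|R\ket{S_r+t}\|$ is only $O(1/\log q)$ in norm, far larger than the target $q^{-(n-1)/2}$ once $n$ is large, so one must realise that the orthogonal symmetry collapses the off--centre amplitude onto about $q$ distinct values, each attained on about $q^{n-1}$ points, and then let normalisation finish. After that the remaining work is bookkeeping: verifying the Witt--orbit sizes in the degenerate (isotropic or small $n$) cases, and tracking the constants in $\|\bar A\|$ and in $\|R\ket{S_r+t}\|$ sharply enough that $h(127,8)$ comes out positive.
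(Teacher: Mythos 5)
Your argument is correct and reaches the stated bounds, but it differs from the paper's proof in two places. For $x=t$ the skeleton is the same (first-order Taylor term $it_0\langle t|\bar A|S_r+t\rangle=it_0\sqrt{s_r}\,(1-s_r/q^n)$ plus a remainder controlled through the Weil-type bound $|\lambda_y|\le 2q^{(n-1)/2}$), but you bound the remainder by Parseval in the character eigenbasis with a sixth-moment estimate, using $\sum_y\lambda_y^2=\mathrm{tr}(A^2)=q^ns_r$ and $|\langle\hat y|S_r+t\rangle|^2=\lambda_y^2/(q^ns_r)$ (valid since $S_r=-S_r$ makes $\lambda_y$ real), whereas the paper simply bounds $|\langle t|\bar A^k|S_r+t\rangle|\le\|\bar A\|^k$ and sums the series, getting the tail $e^{2/\sqrt{\log q}}-1-2/\sqrt{\log q}$; your route yields the cleaner tail $2/\log q$ and hence an even larger admissible $h(q,n)$, at the cost of spelling out the trace identity you use implicitly. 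For $x\neq t$ the paper instead invokes its Lemma~\ref{lem:invariant}: the evolution preserves the $(q+1)$-dimensional subspace spanned by $\{\ket{t},\ket{S_{r'}+t}\}$, so any off-center amplitude is at most $\max_{r'}1/\sqrt{s_{r'}}$. Your $O_n(\F_q)$-symmetry argument --- the affine isometries fixing $t$ commute with $\bar A$ and fix the initial state, so the amplitude is constant on Witt orbits (the shells), and normalization over an orbit of size $\Omega(q^{n-1})$ forces $|\langle x|\psi_{t_0}\rangle|=O(q^{-(n-1)/2})$ --- is essentially an equivalent but more conceptual derivation of the same fact: constancy on shells is exactly membership in $\mathcal H_0$, and both proofs ultimately rest on Witt's theorem (the paper uses it inside the proof of Lemma~\ref{lem:invariant} to show the intersection numbers are shell-independent). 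The remaining items you defer are genuinely minor: the shell-size bound $\ge\frac12 q^{n-1}$ holds once $n\ge 4$ (and is absorbed by the $O(\cdot)$ anyway), and your $h$ is manifestly increasing and comfortably exceeds $0.0016$ at $(q,n)=(127,8)$.
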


From Lemma~\ref{lem:CTQW_prob}, the probability to obtain the center $t$ is $\Omega(1/\log p)$ in step~2, while the probability to obtain any point other than $t$ is exponentially small.
Step~3 then guarantees the center $t$ to be found with high probability.
A more detailed analysis can be found in Appendix~\ref{app:p_q_lower}.
Overall, the quantum algorithm needs $O(\log p)$ samples.

\begin{proof}[Proof of Lemma~\ref{lem:CTQW_prob}]
   As shown in Appendix~\ref{app:CTQW}, the CTQW $e^{i \bar{A} T_0} = \sum_{k=0}^{\infty} \frac{(iT_0)^k}{k!} \bar{A}^k$ has a $(p+1)$-dimensional invariant subspace $\mathcal{H}_0$ spanned by the following orthonormal basis:
    \begin{equation}\label{eq:B_0}
        \mathcal{B}_0 = \{ \ket{t}, \ket{S_0+t}, \ket{S_1+t},\cdots,\ket{S_{p-1}+t} \},
    \end{equation}
    where $S_0 := \{v\in\F_p^n: l(v)=0\} \setminus \{\vec{0}\}$ (see Eq.~\eqref{eq:S_r_def} for $S_r$ with $r\neq 0$).

The following equation~\cite[Theorem 1]{analogue} shows that $s_r := |S_r|\approx p^{n-1}$ for all $r \in \F_p$.

\begin{align}\label{eq:sphere_size}
s_r &=
\begin{cases}
p^{n-1} +\chi((-1)^{\frac{n-1}{2}}r) p^{\frac{n-1}{2}} & n\, \mathrm{odd}, r\neq 0, \\
p^{n-1} -\chi((-1)^{\frac{n}{2}}) p^{\frac{n-2}{2}} &n\, \mathrm{even}, r\neq 0, \\
p^{n-1}-1 & n\, \mathrm{odd}, r=0, \\
p^{n-1} + \chi((-1)^{\frac{n}{2}})(p-1)p^{\frac{n-2}{2}}-1 &n\, \mathrm{even}, r=0,
\end{cases}
\end{align}
where $\chi(a) \in\{0,1,-1\}$ indicates whether $a$ is zero, or the square of some element in $\F_p$, or otherwise.
    
    We first consider $x\neq t$.
    Since $\ket{\psi_{T_0}} \in \mathrm{span}\, \mathcal{B}_0$, we have:
    \begin{align}
        \left| \braket{x | \psi_{T_0}} \right| & \leq \max_{r'\in\F_p} \left| \braket{x | S_{r'}+t } \right| \\
        &= \max_{r'\in\F_p} 1/\sqrt{s_{r'}} = O(p^{-(n-1)/2}).
    \end{align}
    We then consider the lower bound of $\left|\braket{t | \psi_{T_0}} \right|$ as follows:
    \begin{align}
        &\left|\braket{t | \psi_{T_0}} \right|
        = \left|\bra{t} e^{i \bar{A} T_0} \ket{S_r +t}\right| 
        =\left|\bra{t} \sum_{k=0}^{\infty} \frac{(iT_0)^k}{k!} \bar{A}^k \ket{S_r +t}\right| \\
        &\geq T_0 \bra{t} \big(A -\lambda_0 \ket{\widetilde{0}} \bra{\widetilde{0}} \big) \ket{S_r +t} -\sum_{k=2}^{\infty} \frac{T_0^k}{k!} \left|\bra{t} \bar{A}^k \ket{S_r+t}\right| \label{eq:lwb_line_2}\\
        &\geq T_0 \left( \sqrt{s_r} - s_r  \frac{{1}}{\sqrt{p^n}} \frac{{\sqrt{s_r}}}{\sqrt{p^n}} \right)
        -\sum_{k=2}^{\infty} \frac{(2T_0\sqrt{p^{n-1}})^k}{k!} \label{eq:lwb_line_3}\\
        &= \sqrt{s_r T_0^2} -  \frac{s_r}{p^n} \sqrt{s_r T_0^2} -(e^{2T_0\sqrt{p^{n-1}}} -1 -2T_0\sqrt{p^{n-1}}).
    \end{align}
    We have used $\braket{t|S_r+t}=0$ and the triangle inequality $|x+y| \geq |x| -|y|$ and the fact that $\bar{A} = A -\lambda_0 \ket{\widetilde{0}} \bra{\widetilde{0}}$ is a real matrix in Formula~\eqref{eq:lwb_line_2}. 
    Formula~\eqref{eq:lwb_line_3} is because the adjacency matrix $A$ maps $\ket{t}$ to $\sqrt{s_r} \ket{S_r+t}$,
    {$\lambda_0=s_r$, $\ket{\tilde{0}}$ is the uniform superposition over $\F_q^n$,}
    and $|\bra{t} \bar{A}^k \ket{S_r+t}| \leq \|\bar{A}^k\| \leq (2\sqrt{p^{n-1}})^k$, where the second upper bound follows from the fact that the spectral radius of $\bar{A}$ is less than $2\sqrt{p^{n-1}}$~\cite[Theorem 3]{analogue}.

    From Eq.~\eqref{eq:sphere_size} we know $p^{n-1} -p^{(n-1)/2} \leq s_r \leq p^{n-1} +p^{(n-1)/2}$.
    Recall that $T_0 = 1/\sqrt{p^{n-1} \log p}$.
    Thus we have $\frac{1}{\log p}(1-\frac{1}{p^{(n-1)/2}}) \leq s_r\,T_0^2 \leq \frac{1}{\log p}(1+\frac{1}{p^{(n-1)/2}})$, and $s_r/p^n \leq \frac{1}{p} (1+\frac{1}{p^{(n-1)/2}})$.
    Using basic inequalities $\sqrt{a-\varepsilon} \geq \sqrt{a} -\sqrt{\varepsilon}$ and $\sqrt{a+\varepsilon} \leq \sqrt{a} +\sqrt{\varepsilon}$, we can continue to calculate the lower bound of $\left|\braket{t | \psi_{T_0}} \right|$ as follows:
    \begin{align}
    & \left|\braket{t | \psi_{T_0}} \right| 
        \geq \frac{1}{\sqrt{\log p}} (1 -\frac{1}{p^{(n-1)/4}})
        \nonumber \\ &\quad
        -\frac{1}{\sqrt{\log p}}(1 +\frac{1}{p^{(n-1)/4}}) (\frac{1}{p} +\frac{1}{p^{(n+1)/2}})
        -(e^{2/\sqrt{\log p}} -1 -\frac{2}{\sqrt{\log p}}) \\
        &\geq \frac{1}{\sqrt{\log p}} (1 -\frac{1}{p} -\frac{4}{p^{(n-1)/4}} -g(\sqrt{\log p})) \label{eq:lwb2_line_2}\\
        &:= \frac{1}{\sqrt{\log p}} h(p,n)
    \end{align}
    In Formula~\eqref{eq:lwb2_line_2}, $g(x) := x e^{2/x} -x -2$ is monotonically decreasing when $x > 0$, and $g(\sqrt{\log p}) < 1$ when $p \geq 127$.
    Thus $h(p,n)$ is monotonically increasing in both $p$ and $n$, and it can be verified that $h(127,8) \geq 0.0016$.
\end{proof}

\section{Conclusion}
In summary, we have found  unbounded  quantum advantages for a natural  geometric problem of finding the center of a sphere in $\F_p^n$ given samples on the sphere.
While any classical bounded-error algorithm requires $\Omega(n)$ samples on the sphere, the quantum algorithm based on CTQW  needs only $O(1)$ sample{s}. We hope that quantum algorithms will be able to demonstrate their advantages in solving more geometric problems in the future.

\section*{Acknowledgments}
This work was supported by the National Key Research and Development Program of China (Grant No.2024YFB4504004), the National Natural Science Foundation of China (Grant No. 92465202, 62272492, 12447107),  the Guangdong Provincial Quantum Science Strategic Initiative (Grant No. GDZX2303007, GDZX2403001), the Guangzhou Science and Technology Program (Grant No. 2024A04J4892).

\bibliographystyle{quantum}
\bibliography{reference}

\begin{appendices}

\section{Numerical simulation}\label{sec:numerical}
To illustrate that $A$ is good enough as the Hamiltonian, we consider the simplest cases where the size of the finite field is $p=3$ and the radius of the sphere is $r=1$.
The invariant subspace $\mathcal{H}_0 = \mathrm{span}\, \mathcal{B}_0$ of CTQW  $e^{iAT_0}$ is then $4$-dimensional:
\begin{equation}
    \mathcal{B}_0 = \{\ket{t}, \ket{S_0+t}, \ket{S_1+t}, \ket{S_2+t} \}.
\end{equation}

We first consider the case where $n=5$.
Using Lemma~\ref{lem:invariant} in Appendix~\ref{app:CTQW}, we have the following matrix expression of $A$ on the basis $\mathcal{B}_0$.
\begin{equation}\label{eq:M_A}
    M_A = \begin{bmatrix}
0 & 0 & 3\sqrt{10} & 0 \\
0 & 27 & 24\sqrt{2} & 9\sqrt{10} \\
3\sqrt{10} & 24\sqrt{2} & 33 & 12\sqrt{5} \\
0 & 9\sqrt{10} & 12\sqrt{5} & 30
\end{bmatrix}.
\end{equation}
Denote by $p(T_1) := \left|\bra{t} e^{iA T_1\cdot T_0} \ket{S_r+t} \right|^2$ the success probability of reaching the target state after time $T_1 \cdot T_0$, then $p(T_1) = \left|\bra{0} e^{iM_A T_1\cdot T_0} \ket{2} \right|^2$ for the case where $p=3,r=1,n=5$.
By numerical calculation, we obtain $p(T_1)$ with $T_1 \in [0,10]$ as shown in Fig.~\ref{fig:P_t}.
It shows that the success probability of the CTQW oscillates periodically with its evolution time, and the earliest time to achieve the maximum success probability is around $T_1 = 1.5$.

\begin{figure}[hbtp]
\centering
\includegraphics[width=0.7\linewidth]{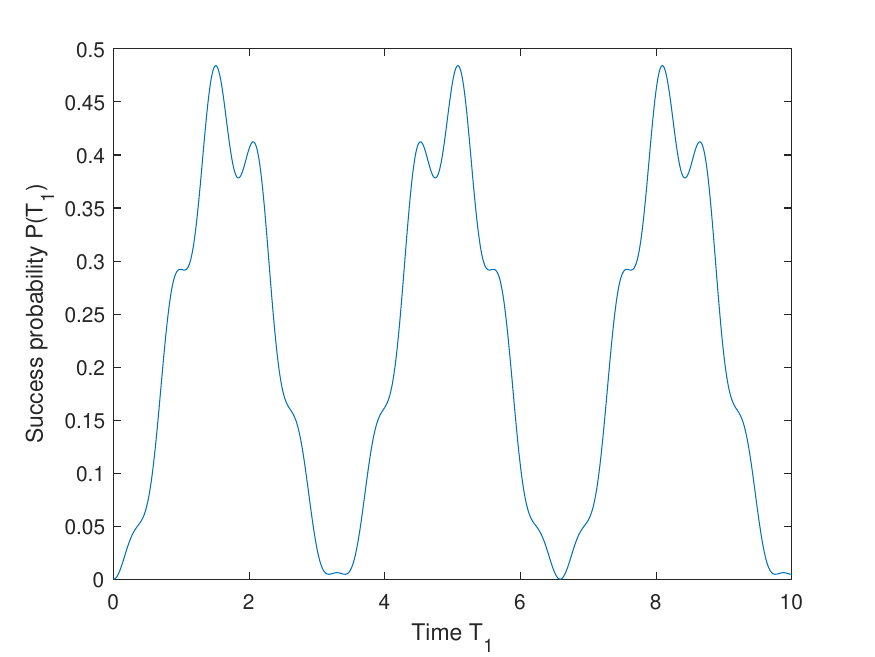}
\caption{\label{fig:P_t} The success probability $p(T_1)$ for the case where $p=3,r=1,n=5$.}
\end{figure}

We then consider the cases where $n=3\sim 12$.
Fig.~\ref{fig:P_t_max} shows that as the dimension $n$ varies, the optimal evolution time $T_\mathrm{max} := \mathop{\arg\max}\limits_{T_1 \in [0,3]} p(T_1)$ lies in $[1,2]$,
and the corresponding maximum success probability $p(T_\mathrm{max}) \geq 0.4$.

\begin{figure}[hbtp]
\centering
\includegraphics[width=0.7\linewidth]{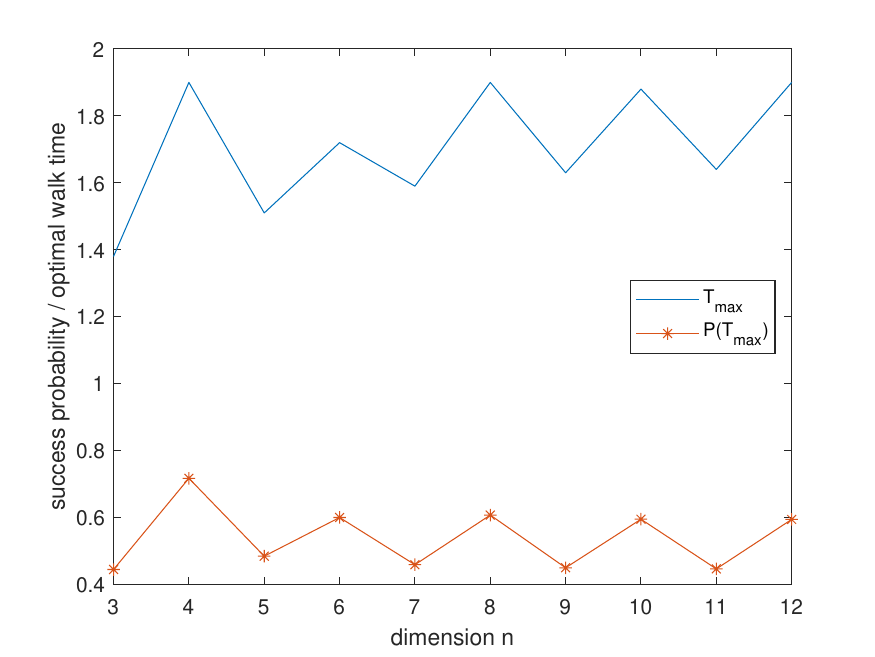}
\caption{\label{fig:P_t_max} Optimal CTQW time $T_\mathrm{max}$ and corresponding success probability $p(T_\mathrm{max})$. Here, $p=3, r=1$, and $n=3\sim 12$.}
\end{figure}

\section{Uniqueness of the center of a sphere}\label{app:unique}

Recall that the sphere with center $t\in \F_p^n$ and radius $r \in \F_p \setminus\{0\} =: \F_p^*$ is denoted by $S_r+t := \{ x+t \mid x \in S_r \}$, where $S_r := \{ x \in \F_p^n \mid l(x) =r \}$ and $l(x) := \sum_{j=1}^{n} x_j^2$.
In the rest of this section, we will prove the following theorem.

\begin{Theorem}\label{thm:unique}
    For two different points $t \neq t'$ in $\F_p^n$, where $p$ is an odd prime number and dimension $n\geq 2$, the spheres centered at them both with radius $r \in \F_p^*$ are different, namely $S_r+t \neq S_r+t'$.
\end{Theorem}

Let $\bar{t} = t'-t$, then it is equivalent to showing that $S_r \neq S_r +\bar{t}$ if $\bar{t} \neq 0$.
We will prove this by contradiction.
Assume the opposite that $S_r +\bar{t} = S_r$, then $x_i + \bar{t} \in S_r +\bar{t} =S_r$ for any $x_i \in S_r$.
From the definition of $S_r$, we have $l(x_i+\bar{t}) = r$, which gives the following $s_r = |S_r|$ equations:
\begin{equation}\label{eq:x_ij_t_square}
	\sum_{j=1}^{n} (x_{ij}+\bar{t}_j)^2 = r, \quad \forall i\in\{1,\dots,s_r\}.
\end{equation}

In order to simplify Eq.~\eqref{eq:x_ij_t_square}, note that $\sum_{j=1}^{n} x_{ij}^2 =r$ by the fact that $x_i \in S_r$, and thus Eq.~\eqref{eq:x_ij_t_square} is simplified to $\sum_{j=1}^{n} 2x_{ij} \bar{t}_j +\sum_{j=1}^{n} {\bar{t}_j}^2 =0$.
Denote by $a$ the multiplicative inverse of $2$ in the finite field $\F_p$, i.e. $2a=1$ (note that $a$ exists and $a\neq 0$, because $2$ is coprime with the odd prime $p$), then Eq.~\eqref{eq:x_ij_t_square} is further simplified to:
\begin{equation}\label{eq:a_x_ij}
	a \sum_{j=1}^{n} {\bar{t}_j}^2 + \sum_{j=1}^{n} x_{ij} \bar{t}_j =0, \quad \forall i\in\{1,\dots,s_r\}.
\end{equation}

The above Eq.~\eqref{eq:a_x_ij} can be regarded as a system of linear equations in $n+1$ variables $(\sum_{j=1}^{n} {\bar{t}_j}^2, \bar{t}_1,\cdots, \bar{t}_n)$, with coefficient matrix $A$ whose first column consists entirely of $a$, and whose rest $n$ columns and $s_r$ rows are such that each row is a point $x_i = (x_{i1},\cdots,x_{in})$ in $S_r$.
We will later prove the following Fact~\ref{fact:linear_independent}, from which the system of linear equations shown in Eq.~\eqref{eq:a_x_ij} has only the all zero solution, implying $\bar{t} =0$.
This contradicts $\bar{t} \neq 0$, which indicates that the initial assumption $S_r +\bar{t} = S_r$ is wrong, and thus Theorem~\ref{thm:unique} is proved.

\begin{Fact}\label{fact:linear_independent}
    The coefficient matrix $A$ of the system of linear equations shown in Eq.~\eqref{eq:a_x_ij} contains a square submatrix $B$ with size $n+1$ which is full-rank.
\end{Fact}

To prove Fact~\ref{fact:linear_independent}, we first consider the simple case where the radius $r=1$ and dimension $n=3$.
The proof of the general case is basically the same as this simple case.
Consider the unit sphere $S_1$ in $\F_p^3$, which obviously contains the following $4 = n+1$ points: $(1,0,0)$, $(0,1,0)$, $(0,0,1)$ and $(-1,0,0)$.
So we get a square submatrix with size $n+1$ of the coefficient matrix $A$:
\begin{equation}\label{eq:matrix_B_1}
	B=
\left[\begin{array}{c|ccc}
a & 1 & 0 & 0 \\
a & 0 & 1 & 0 \\
a & 0 & 0 & 1 \\
 \hline
a & -1 & 0 & 0
\end{array}\right].
\end{equation}
To prove that $B$ is full-rank, it suffices to show that its determinant $|B| \neq 0$.
To calculate $|B|$, we perform the following row operations on $B$: subtract the last row from the first $3=n$ rows, and arrive at:
\begin{align}\label{eq:det_B_1}
	|B| &=
\left|\begin{array}{c|ccc}
0 & 2 & 0 & 0 \\
0 & 1 & 1 & 0 \\
0 & 1 & 0 & 1 \\
 \hline
a & -1 & 0 & 0
\end{array}\right|
\\& = (-1)^{4+1} a\cdot 2
\\& \neq 0.
\end{align}

It is easy to see that the above argument also works for dimensions $n \geq 2$, as long as the radius $r=1$.
Note that the argument relies on the fact that the equation $x^2 = 1$ in $\F_p$ has two solutions $1$ and $-1$.
However, for general radii $r\in \F_p^*$, the equation $x^2 =r$ in $\F_p$ may not have any solution.
For example, squaring all the elements in $\F_3$ results in $\{0,1\}$ excluding $2$, meaning $x^2 = 2$ has no solution in $\F_3$.
In this case, $2$ is called a quadratic non-residue modular $3$.
To address this issue, we will use the following Fact~\ref{fact:sum_of_squares}.

\begin{Fact}\label{fact:sum_of_squares}
    For any $r \in \F_p^*$, there exist $x \in \F_p^*$ and $y \in \F_p$ such that $x^2 +y^2 =r$.
\end{Fact}

\begin{proof}
Consider the function $f: x \mapsto x^2$ on $\F_p$.
We first show that the range of $f$ (denoted by $Q$, whose elements are called quadratic residues modular $p$) is of size $\frac{p+1}{2}$.

Since $f(0) =0$, and $f(\F_p^*) \subseteq \F_p^*$ (if $x\neq 0$ and $x^2 =0$, then multiplying both sides by the inverse of $x$ yields $x=0$, a contradiction), it suffices to show that $f$ is two-to-one on $\F_p^*$.
To do so, it suffices to show that for any $b \in \F_p^*$, the equation $x^2 =b$ in $\F_p$ either has no solutions or has exactly two solutions.
Suppose there is a solution $c \neq 0$, then the equation $x^2 =b$ becomes $x^2 =c^2$, which is $(x-c)(x+c)=0$.
This equation has at most two solutions: $x =c$ and $x=-c$.
All that remains is to prove that these two solutions are different, namely $c\neq -c$.
Assuming $c = -c$, then $2c = 0$, multiplying both sides by $a$, i.e. the multiplicative inverse of $2$, gives $c = 0$, which contradicts with $c \neq 0$.

For any $r \in \F_p$, the size of the sets $Q$ and $r-Q$ are both $\frac{p+1}{2}$, and since they are both subsets of $\F_p$, their intersection is non-empty.
This implies that there exist $b_1,b_2 \in Q$ such that $b_1 =r-b_2$.
Assume $b_1 =x^2$ and $b_2 =y^2$, then $x^2 +y^2 =r$.
If $r \neq 0$, then one of $x,y$ must be non-zero (otherwise $r=0$, a contradiction), so Fact~\ref{fact:sum_of_squares} is proven.
\end{proof}

According to Fact~\ref{fact:sum_of_squares}, there exist $x \in \F_p^*$ and $y \in \F_p$ such that $x^2 +y^2 =r$ for $r \in \F_p^*$.
We now prove Fact~\ref{fact:linear_independent} for the two cases $y=0$ and $y\neq 0$.

\textbf{Case I}: $y = 0$.
Denote by $e_k$ the row vector with $1$ in the $k$-th column and $0$ in the rest of the columns.
For $k\in\{1,\dots,n\}$, let $v_k := a e_1 +x e_{k+1}$, and let ${v}_{n+1} := ae_1 -xe_2$.
Since $x^2 =r$ now, the $(n+1)$-sized square matrix $B$ with $(v_1,\dots,v_{n+1})$ as its row vectors is a submatrix of the coefficient matrix $A$ of the system of linear equations shown in Eq.~\eqref{eq:a_x_ij}.
More specifically, taking $n=3$ as an example, the square matrix $B$ only needs to replace $1$ in Eq.~\eqref{eq:matrix_B_1} with $x$.
Therefore, similar to the calculation process shown in Eq.~\eqref{eq:det_B_1}, we know that the determinant of the square matrix $B$ is $|B| = (-1)^{n+2}\cdot a \cdot 2x^n$.
Since $a \neq 0$ and $x\neq 0$, we have $|B| \neq 0$, and thus Fact~\ref{fact:linear_independent} holds.

\textbf{Case II}: $y\neq 0$.
For $k \in \{1,\dots,n-1\}$, let $v_k := a e_1 +xe_{k+1} +ye_{n+1}$, and let $v_{n} := ae_1 +xe_{n} +ye_{2}$ and $v_{n+1} := ae_1 -xe_{2} +ye_{n+1}$.
Since $x^2 +y^2 =r$ now, the $(n+1)$-sized square matrix $B$ with $(v_1,\dots,v_{n+1})$ as its row vectors is a submatrix of the coefficient matrix $A$ of the system of linear equations shown in Eq.~\eqref{eq:a_x_ij}.
More specifically, taking $n=4$ as an example, the square matrix $B$ is as follows:
\begin{equation}\label{eq:matrix_B_xy}
	B=
\left[\begin{array}{c|cccc}
a & x & 0 & 0 & y \\
a & 0 & x & 0 & y \\
a & 0 & 0 & x & y \\
a & y & 0 & x & 0 \\
 \hline
a & -x & 0 & 0 & y
\end{array}\right].
\end{equation}
To prove that $B$ is full-rank, it suffices to show that its determinant $|B| \neq 0$.
To calculate $|B|$, we perform the following row operations on $B$: subtract the last row from the first $4=n$ rows, and arrive at:
\begin{align}\label{eq:det_B_xy}
	|B| &=
\left|\begin{array}{c|cccc}
0 & 2x & 0 & 0 & 0\\
0 & x & x & 0 & 0\\
0 & x & 0 & x & 0\\
0 & x+y & 0 & x & -y\\
 \hline
a & -x & 0 & 0 & y
\end{array}\right|
\\& = (-1)^{n+3} 2a  x^{n-1}y.
\end{align}
Since $a \neq 0$, $x\neq 0$ and $y \neq 0$, we have $|B| \neq 0$, and thus Fact~\ref{fact:linear_independent} holds.


\section{Proof of Lemma~\ref{thm:minimax}}\label{app:minimax}


{A randomized algorithm $\rho: \mathcal{X} \to \mathcal{D}(\mathcal{Y})$ as defined by Eq.~\eqref{eq:rho}} is easy to calculate its success probability.
However, it is not convenient for further analysis, as it involves a huge number of $|\mathcal{X}| = |(\F_p^n)^h| = p^{nh}$ probability distributions on the solution space $\mathcal{Y} = \F_p^n$.
Alternatively, we can also define a randomized algorithm as
\begin{equation}\label{eq:p}
	\tau \in \mathcal{D}(\mathcal{Y}^{|\mathcal{X}|}),
\end{equation}
where $\mathcal{Y}^{|\mathcal{X}|}$ consists of all the deterministic strategies $f: \mathcal{X} \to \mathcal{Y}$.
The probability of outputting an answer $y \in \mathcal{Y}$ given a sample combination $x \in \mathcal{X}$ can be calculated as follows:
\begin{equation}\label{eq:rho_p}
	\rho(x,y) := \sum_{f \in \mathcal{Y}^{|\mathcal{X}|} : f(x) = y} \tau(f).
\end{equation}
Since the range of $f$ is $\mathcal{Y}$, it is easy to see that $\sum_{y \in \mathcal{Y} } \rho(x,y) = \sum_{f \in \mathcal{Y}^{|\mathcal{X}|} } \tau(f) =1$, and thus $\rho(x,\cdot)$ is indeed a distribution on $\mathcal{Y}$ for any $x \in \mathcal{X}$.
This also means that Eq.~\eqref{eq:rho_p} provides the way to transform from the second definition $\tau \in \mathcal{D}(\mathcal{Y}^{|\mathcal{X}|})$ of a randomized algorithm to the first definition $\rho: \mathcal{X} \to \mathcal{D}(\mathcal{Y})$.

The other direction of transformation from $\rho$ to $\tau$ is not unique, since the second definition $\tau \in \mathcal{D}(\mathcal{Y}^{|\mathcal{X}|})$ has more degrees of freedom than the first definition $\rho: \mathcal{X} \to \mathcal{D}(\mathcal{Y})$, by the inequality $|\mathcal{Y}|^{|\mathcal{X}|} -1 > |\mathcal{X}| \cdot (|\mathcal{Y}| -1)$ when both $|\mathcal{X}|$ and $|\mathcal{Y}|$ are greater than $1$.
In other words, different {$\tau$} can lead to the same {$\rho$}.
Nevertheless, Lemma~\ref{lem:rho_to_p} in the following provides a simple method to recover one such $\tau$ from $\rho$.
\begin{Lemma}\label{lem:rho_to_p}
For a randomized algorithm of the first definition $\rho: \mathcal{X} \to \mathcal{D}(\mathcal{Y})$, let $\tau : \mathcal{Y}^{|\mathcal{X}|} \to [0,1]$ be defined as
\begin{equation}\label{eq:p_rho}
	\tau(f) := \prod_{x\in \mathcal{X}} \rho(x,f(x)).
\end{equation}
Then $\tau \in \mathcal{D}(\mathcal{Y}^{|\mathcal{X}|})$ is a randomized algorithm of the second definition as shown by Eq.~\eqref{eq:p}, and it has the same output distribution as $\rho$ in the sense of Eq.~\eqref{eq:rho_p}.
\end{Lemma}
\begin{proof}
We need to prove the following two points.
First, $p$ defined by Eq.~\eqref{eq:p_rho} is indeed a probability distribution on $\mathcal{Y}^{|\mathcal{X}|}$, and thus we need to prove that
\begin{equation}\label{eq:p_rho_facT_1}
	\sum_{f \in \mathcal{Y}^{|\mathcal{X}|}} \tau(f) = 1.
\end{equation}
Second, $\tau$ defined by Eq.~\eqref{eq:p_rho} satisfies the relationship shown in Eq.~\eqref{eq:rho_p}.
Since the summation in Eq.~\eqref{eq:rho_p} is over $f \in \mathcal{Y}^{|\mathcal{X}|}$ such that $f(x) = y$, it suffices to show that
\begin{equation}\label{eq:p_rho_fact_2}
	1 = \sum_{f' \in \mathcal{Y}^{|\mathcal{X}|-1} }
	\prod_{x'\in \mathcal{X}\setminus \{x\} }
	\rho(x',f'(x')).
\end{equation}
From Eqs.~\eqref{eq:p_rho_facT_1} and \eqref{eq:p_rho_fact_2}, it suffices to verify that for any subset $S = \{s_1, \cdots, s_n\} \subseteq \mathcal{X}$, where $1\leq n \leq |\mathcal{X}|$, the following equations hold:
\begin{align}
	1 &= \sum_{ f'(s_1) \in \mathcal{Y} } \rho(s_1, f'(s_1))
	\sum_{ f'(s_2) \in \mathcal{Y} } \rho(s_2, f'(s_2)) \cdots
	\sum_{ f'(s_n) \in \mathcal{Y} } \rho(s_n, f'(s_n)) \label{eq:many_rho_line1}\\
	&= \sum_{ f'(s_1) \in \mathcal{Y} }
	\sum_{ f'(s_2) \in \mathcal{Y} } \cdots
	\sum_{ f'(s_n) \in \mathcal{Y} }
	\rho(s_1, f'(s_1)) \rho(s_2, f'(s_2)) \cdots \rho(s_n, f'(s_n)) \\
	&= \sum_{f' \in \mathcal{Y}^{n}} \prod_{x' \in S} \rho(x', f'(x')),
\end{align}
where Eq.~\eqref{eq:many_rho_line1} follows from the definition of $\rho$ that $\rho(x) \in \mathcal{D}(\mathcal{Y})$ for any $x\in \mathcal{X}$.
\end{proof}
From a higher level of perspective, Eq.~\eqref{eq:rho_p} shows that given the joint distribution $\tau \in \mathcal{D}(\mathcal{Y}^{|\mathcal{X}|})$ of a $|\mathcal{X}|$-dimensional random vector taking values in $\mathcal{Y}^{|\mathcal{X}|}$, we can compute its marginal distributions $\rho: \mathcal{X} \to \mathcal{D}(\mathcal{Y})$,
whereas Eq.~\eqref{eq:p_rho} shows that when the marginal distributions $\{\rho(x) : x\in\mathcal{X} \}$ are given, we can recover a joint distribution that has the same marginal distributions by simply taking their products.

The RHS of Eq.~\eqref{eq:minimax_sample} in Lemma~\ref{thm:minimax} involves the first definition $\rho: \mathcal{X} \to \mathcal{D}(\mathcal{Y})$ of randomized algorithms, which is not easy to deal with.
We now show that it can be converted to the second form $\tau \in \mathcal{D}(\mathcal{Y}^{|\mathcal{X}|})$ of randomized algorithms:
\begin{equation}\label{eq:minimax_p_rho}
	\max_{\tau \in \mathcal{D}(\mathcal{Y}^{|\mathcal{X}|})} \min_{y \in \mathcal{Y}} \mathop{\mathbb{E}}\limits_{F \sim \tau} [\Pr_{X \sim \sigma(y)} (F(X)=y)]
	=
	\max_{\rho\in [\mathcal{X} \to \mathcal{D}(\mathcal{Y})]} \min_{y\in \mathcal{Y}} \mathop{\mathbb{E}}\limits_{X\sim \sigma(y)} [\rho(X,y)]
\end{equation}

\begin{proof}[Proof of Eq.~\eqref{eq:minimax_p_rho}]
{
We will prove from two directions: the LHS of Eq.~\eqref{eq:minimax_p_rho} is greater than or equal to its RHS,
and the LHS of Eq.~\eqref{eq:minimax_p_rho} is less than or equal to its RHS.
}
To prove the general inequality $\max_{a\in A } \min_{b \in B} \{F_{a,b}\} \leq \max_{c\in C} \min_{b\in B} \{G_{c,b}\}$, it suffices to show that for any $a\in A$ there is a $c\in C$ such that $\min_{b \in B} \{F_{a,b}\} \leq \min_{b\in B} \{G_{c,b}\}$, {for which it} suffices to show that $F_{a,b} = G_{c,b}$ for any $b\in B$.

We first prove that {the LHS of Eq.~\eqref{eq:minimax_p_rho} is greater than or equal to its RHS.}
As explained above, it suffices to show that for any randomized algorithm $\rho \in [\mathcal{X} \to \mathcal{D}(\mathcal{Y})]$ of the first definition, the randomized algorithm $\tau \in \mathcal{D}(\mathcal{Y}^{|\mathcal{X}|})$ of the second definition defined as $\tau(f) := \prod_{x\in \mathcal{X}} \rho(x,f(x))$ by Eq.~\eqref{eq:p_rho} satisfies:
\begin{equation}\label{eq:E_p_rho}
	\mathop{\mathbb{E}}\limits_{F \sim \tau} [\Pr_{X \sim \sigma(y)} (F(X)=y)] = \mathop{\mathbb{E}}\limits_{X\sim \sigma(y)} [\rho(X,y)]
\end{equation}
for any solution $y\in \mathcal{Y}$. Let $\delta[A] \in \{1,0\}$ be the boolean function that indicates whether event $A$ happens or not.
Then Eq.~\eqref{eq:E_p_rho} can be verified as follows:
\begin{align}
	\mathop{\mathbb{E}}\limits_{F \sim \tau} [\Pr_{X \sim \sigma(y)} (F(X)=y)] 
	&= \sum_{f \in \mathcal{Y}^{|\mathcal{X}|}} \tau(f) \sum_{x\in \mathcal{X}} \sigma(y,x) \cdot \delta[f(x)=y] \label{eq:sample_pr_line_1}\\
	&= \sum_{x\in \mathcal{X}} \sigma(y,x) \sum_{f \in \mathcal{Y}^{|\mathcal{X}|}} \tau(f) \cdot \delta[f(x)=y] \\
	&= \sum_{x\in \mathcal{X}} \sigma(y,x) \rho(x,y) \label{eq:sample_pr_line_3} \\
	&= \mathop{\mathbb{E}}\limits_{X\sim \sigma(y)} [\rho(X,y)], \label{eq:sample_pr_line_4}
\end{align}
where Eq.~\eqref{eq:sample_pr_line_3} follows from the fact that $\tau(f) = \prod_{x\in \mathcal{X}} \rho(x,f(x))$ satisfies $\sum_{f \in \mathcal{Y}^{|\mathcal{X}|} : f(x) = y} \tau(f) = \rho(x,y)$ by Lemma~\ref{lem:rho_to_p}.

We now prove that {the LHS of Eq.~\eqref{eq:minimax_p_rho} is less than or equal to its RHS.}
As explained above, it suffices to show that for any randomized algorithm $\tau \in \mathcal{D}(\mathcal{Y}^{|\mathcal{X}|})$ of the second definition, the randomized algorithm $\rho \in [\mathcal{X} \to \mathcal{D}(\mathcal{Y})]$ of the first definition defined as $\rho(x,y) := \sum_{f \in \mathcal{Y}^{|\mathcal{X}|}} \tau(f) \cdot \delta[f(x)=y]$ by Eq.~\eqref{eq:rho_p} satisfies:
\begin{equation}
	\mathop{\mathbb{E}}\limits_{X\sim \sigma(y)} [\rho(X,y)] = \mathop{\mathbb{E}}\limits_{F \sim \tau} [\Pr_{X \sim \sigma(y)} (F(X)=y)],
\end{equation}
for any solution $y\in \mathcal{Y}$.
This can be verified by simply carrying out the calculation shown in Eqs.~\eqref{eq:sample_pr_line_1} to \eqref{eq:sample_pr_line_4} in the reverse order.
\end{proof}

From Eq.~\eqref{eq:minimax_p_rho}, to prove Lemma~\ref{thm:minimax}, it suffices to show:
\begin{equation}\label{eq:minimax_f_p}
	\min_{\mu \in \mathcal{D}(\mathcal{Y})}
	\max_{f \in \mathcal{Y}^{|\mathcal{X}|}}
	\mathop{\mathbb{E}}\limits_{Y\sim \mu}
	[\Pr_{X \sim \sigma(Y)} (f(X)=Y)]
	= \max_{\tau \in \mathcal{D}(\mathcal{Y}^{|\mathcal{X}|})}
	\min_{y \in \mathcal{Y}}
	\mathop{\mathbb{E}}\limits_{F \sim \tau} [\Pr_{X \sim \sigma(y)} (F(X)=y)].
\end{equation}
Consider the LHS of the above Eq.~\eqref{eq:minimax_f_p}.
Denote by
\begin{equation}\label{eq:P_element}
	P(f,y) := \Pr_{X \sim \sigma(y)} (f(X)=y),
\end{equation}
the success probability of the deterministic strategy $f$ under the random sample combination $X \sim \sigma(y)$.
Let us regard $P$ as a matrix with $|\mathcal{Y}|^{|\mathcal{X}|}$ rows and $|\mathcal{Y}|$ columns, whose elements $P(f,y)$ are defined by Eq.~\eqref{eq:minimax_f_p}.
For a finite set $\mathcal{Z}$, denote by $e_z$ the $|\mathcal{Z}|$-dimensional column vector with $1$ in the $z$ component and $0$ in the other components.
Let us also regard a probability distribution $\sigma \in \mathcal{D}(\mathcal{Z})$ as a $|\mathcal{Z}|$-dimensional column vector. Then Eq.~\eqref{eq:minimax_f_p} reads:
\begin{equation}\label{eq:min_max_sample}
	\min_{\mu \in \mathcal{D}(\mathcal{Y})} \max_{f \in \mathcal{Y}^{|\mathcal{X}|}} e_f^T P \mu
	=
	\max_{\tau \in \mathcal{D}(\mathcal{Y}^{|\mathcal{X}|})} \min_{y \in \mathcal{Y}} \tau^T P e_y. 
\end{equation}

The above Eq.~\eqref{eq:min_max_sample} bears similarity with the following Eq.~\eqref{eq:min_max_von} known as von Neumann's minimax theorem~\cite{Neumann1928}.
\begin{equation}\label{eq:min_max_von}
	\min_{\sigma_2 \in \mathcal{D}(A_2)} \max_{\sigma_1 \in \mathcal{D}(A_1)} \sigma_1^T u \sigma_2
	= \max_{\sigma_1 \in \mathcal{D}(A_1)} \min_{\sigma_2 \in \mathcal{D}(A_2)} \sigma_1^T u \sigma_2,
\end{equation}
where $A_1$ and $A_2$ are finite sets, and $u$ is a real matrix with $|A_1|$ rows and $|A_2|$ columns.
The above Eq.~\eqref{eq:min_max_von} was first proved by von Neumann~\cite{Neumann1928} in 1928, but the most popular proof may be the one based on separating hyperplane theorem appeared in von Neumann and Morgenstern’s famous book on game theory~\cite{game_theory_1945} published in 1944.
The proof by Loomis~\cite{Loomis_1946} in 1946 based on mathematical induction on $|A_1|+|A_2|$ is probably the simplest, and Weinstein~\cite{Weinstein_2022} has simplified Loomis's proof to only half a page.

Comparing Eq.~\eqref{eq:min_max_sample} with Eq.~\eqref{eq:min_max_von}, it suffices to show that:
\begin{equation}\label{eq:max_sigma_1}
	\max_{\sigma_1 \in \mathcal{D}(A_1)} \sigma_1^T u \sigma_2 = \max_{a_1\in A_1} e_{a_1}^T u \sigma_2,
\end{equation}
for any $\sigma_2 \in \mathcal{D}(A_2)$; and that:
\begin{equation}\label{eq:min_sigma_2}
	\min_{\sigma_2 \in \mathcal{D}(A_2)} \sigma_1^T u \sigma_2 = \min_{a_2 \in A_2} \sigma_1^T u e_{a_2},
\end{equation}
for any $\sigma_1 \in \mathcal{D}(A_1)$.
We now prove Eq.~\eqref{eq:max_sigma_1}, and Eq.~\eqref{eq:min_sigma_2} can be shown with similar arguments.
When $\sigma_1$ is fixed, $\sigma_1^Tu$ is a $|A_2|$-dimensional row vector and we {denote} it by $v^T$.
Assume that its minimum component is $v(i_0)$.
Then for any $\sigma_2 \in \mathcal{D}(A_2)$, $v^T \sigma_2 = \sum_{i\in A_2} v(i) \sigma(i) \geq v(i_0) \sum_{i\in A_2}\sigma(i) = v(i_0)$.
Note also that $v^T \sigma_2 =v(i_0)$ when $\sigma_2 = e_{i_0}$.
Therefore, $ \min_{\sigma_2 \in \mathcal{D}(A_2)} \sigma_1^T u \sigma_2 = v(i_0)$, where $v(i_0)$ by definition is equal to $\min_{a_2 \in A_2} \sigma_1^T u e_{a_2}$.

Plugging Eqs.~\eqref{eq:max_sigma_1} and \eqref{eq:min_sigma_2} back to Eq.~\eqref{eq:min_max_von}, we have proven Eq.~\eqref{eq:min_max_sample}, which is the matrix form of Eq.~\eqref{eq:minimax_f_p}.
Therefore, we have proven Lemma~\ref{thm:minimax}.

\section{Lower bound on the quantum algorithm's success probability}\label{app:p_q_lower}

In this section, we give a detailed analysis of the lower bound on the success probability of the quantum algorithm presented above Lemma~\ref{lem:CTQW_prob}, according to which we denote by $P_t = \Omega(1/\log p)$ the lower bound on the probability of measuring the center $t$, and by $P_x = O(1/p^{-(n-1)})$ the upper bound on the probability of obtaining any other point $y \in \F_p^n \setminus \{t\}$.

\begin{Lemma}\label{lem:p_q_lower}
Let $c>0$ be a constant such that the number $T = c/P_t$ of iterations (steps 1 and 2 in the quantum algorithm) is an integer greater than $2$.
Then the success probability $P_s$ of obtaining the center $t$ has the following lower bound:
\begin{equation}\label{eq:P_s}
	P_s \geq 1 - e^{-c} -\frac{c P_x}{1-P_t} e^{-c} -\frac{p^n c^2 P_x^2}{2P_t^2}.
\end{equation}
\end{Lemma}

Lemma~\ref{lem:p_q_lower} implies that the quantum algorithm will obtain the center $t$ with high probability when $p$ is fixed and $n$ is large enough.
The reason is as follows.
According to Eq.~\eqref{eq:sphere_size}, the size of a sphere has an exponentially large lower bound, i.e. $s_r \geq p^{n-1} -p^{(n-1)/2}$, and hence $P_x \leq 2/p^{n-1}$.
Substituting it into Eq.~\eqref{eq:P_s}, we have:
\begin{align}
	P_s & \geq 1 -e^{-c} -\frac{2p c e^{-c}}{1-P_t}p^{-n} - \frac{2p^2 c^2 }{P_t^2} p^{-n} \\
	& \geq 1 -e^{-c} -O(p^{-n}).
\end{align}
Since we regard $p$ and $P_t = \Omega(1/\log p)$ as constants, the success probability of the quantum algorithm $P_s \geq 1 -e^{-c} -o(1)$ has a constant lower bound when $n$ is large enough.

Before giving the proof of Lemma~\ref{lem:p_q_lower}, we present an example showcasing a significant quantum-classical separation.
According to the numerical simulations in Appendix~\ref{sec:numerical}, the maximum success probability of obtaining the center $t$ after a period of CTQW is lower bounded by $P_t =0.4$ when $p=3, n=12$, and $r=1$.
Therefore, the probability of obtaining any other point is upper bounded by $P_x = (1-P_t)/(p^{n-1}-p^{(n-1)/2})$.
Set the number of iterations as $T=3$, which is also the number of quantum samples used, then the constant $c = TP_t =1.2$.

Substituting all the parameters above into Eq.~\eqref{eq:P_s}, it can be calculated numerically that $P_s \geq 0.698$.
On the other hand, the upper bound on the success probability of any classical randomized algorithm is $P_c = p^{-(1-2T/n)n} \leq 0.0014$ by Theorem~\ref{thm:classical_lower}.
Therefore in this example, the success probability of the quantum algorithm is at least $P_s/P_c \geq 509$ times that of any classical algorithm, provided with the same number $T=3$ of quantum or classical samples.

\begin{proof}[Proof of Lemma~\ref{lem:p_q_lower}.]
Consider the following event: among the $T$ measurement outcomes $x= (x_1,\cdots,x_T)$, the number of occurences of the center $t$ denoted by $C(x,t)$ satisfies $C(x,t) \geq 2$, and the number of occurences of any other point $y \in \F_p^n \setminus \{t\}$ satisfies $C(x,y) \leq 1$.
If this event happens, then the algorithm succeeds.
Therefore, the lower bound on the success probability $P_s$ can be calculated as follows:
\begin{align}
	P_s & \geq \Pr[C(x,t) \geq 2 \cap \bigcap_{ y \in \F_p^n \setminus \{t\} } C(x,y) \leq 1 ] \\
    & \geq 1- \Pr[C(x,t) =0] -\Pr[C(x,t) =1] -\Pr[\bigcup_{ y \in \F_p^n \setminus \{t\} } C(x,y) \geq 2] \label{eq:P_s_line_2} \\
    & \geq 1 - (1-P_t)^T -T(1-P_t)^{T-1}P_x - (p^n-1)\binom{T}{2} P_x^2 \label{eq:P_s_line_3} \\
    & \geq 1 - e^{-c} -\frac{c P_x}{1-P_t} e^{-c} -\frac{p^n c^2 P_x^2}{2P_t^2}. \label{eq:P_s_line_4}
\end{align}
In Eq.~\eqref{eq:P_s_line_2} we used the basic formula $\Pr(A) = 1- \Pr(\neg A)$ where $A$ is an event, and the union bound $\Pr(A \cup B) \leq \Pr(A) + \Pr(B)$, as well as De Morgan's law of events $\neg (A \cap B) = \neg A \cup \neg B$.
The last term of Eq.~\eqref{eq:P_s_line_3} uses the equivalent meaning of the event $\cup_{ y \in \F_p^n \setminus \{t\} } C(x,y) \geq 2$, i.e. there is a pair of subscripts $i \neq j$ in the $T$ measurement outcomes $x= (x_1,\cdots,x_T)$ such that $x_i =x_j =y \in \F_p^n \setminus \{t\}$.
This can be formally expressed as $\cup_{ y \in \F_p^n \setminus \{t\} } \cup_{ \{i \neq j\} \subseteq [T] } [ (x_i=y) \cap (x_j=y) ] $, where $[T] := \{1,\dots,T\}$.
In Eq.~\eqref{eq:P_s_line_4} we used the fact that $(1-x)^{1/x} \leq e^{-1}$ for $x \in (0,1)$.
\end{proof}

\section{Invariant supspace of the CTQW}\label{app:CTQW}
We will prove the following Lemma~\ref{lem:invariant}, used in the proof of Lemma~\ref{lem:CTQW_prob} and also in obtaining the reduced matrix $M_A$ in Eq.~\eqref{eq:M_A}.
Lemma~\ref{lem:invariant} is extracted from~\cite[Lemma 4]{nonlinear}, but with a more detailed proof for the convenience of the reader.
Lemma~\ref{lem:invariant} implies that $\mathcal{H}_0$ is an invariant subspace of the CTQW $e^{i \bar{A} T_0} = \sum_{k=0}^{\infty} \frac{(iT_0)^k}{k!} \bar{A}^k$, since $\bar{A} = A -\lambda_0 \ket{\widetilde{0}} \bra{\widetilde{0}}$, and $\sqrt{p^n}\ket{\tilde{0}} =\sum_{x\in\F_p^n} \ket{x} =\ket{t} + \sum_{r'=0}^{p-1} \sqrt{s_{r'}} \ket{S_{r'}+t} \in \mathcal{H}_0$.

\begin{Lemma}\label{lem:invariant}
The adjacency matrix $A$ has a $(p+1)$-dimensional invariant subspace $\mathcal{H}_0$ spanned by the following orthonormal basis:
\begin{equation}
        \mathcal{B}_0 = \{ \ket{t}, \ket{S_0+t}, \ket{S_1+t},\cdots,\ket{S_{p-1}+t} \}.
    \end{equation}
Specifically, we have
\begin{align}
A\ket{t} &= \sqrt{s_r}\ket{S_r+t}, \label{eq:At}\\
A\ket{S_{r'}+t} &= c_{r'} \ket{t} +\sum_{r''=0}^{p-1} c(r'',r') \ket{S_{r''}+t}, \label{eq:Srt_B_0}
\end{align}
where $c(r'',r') := \sqrt{\frac{s_{r''}}{s_{r'}}} \left| S_r \cap (S_{r'}+v_0'') \right|$ for arbitrary $v_0''\in S_{r''}$, and $c_{r'} := \delta_{r',r}\sqrt{s_r}$.
\end{Lemma}

\begin{proof}
Eq.~\eqref{eq:At} follows from the definition that $A$ maps any $x\in \F_p^n$ to $\{y\in {\F}_p^n \mid l(y-x)=r\} = S_r+x$.
To prove Eq.~\eqref{eq:Srt_B_0}, we consider any point $v''+t$ on $S_{r''}+t$, where $v'' \in S_{r''}$.
We calculate
\begin{align}
&\bra{v''+t} A \ket{S_{r'}+t} \\
&= \frac{1}{\sqrt{s_{r'}}} \sum_{v\in S_r} \sum_{v'\in S_{r'}} \braket{v''+t | v-v'+t} \label{eq:S_r_expand}\\
&= \frac{1}{\sqrt{s_{r'}}} \left| S_r \cap (S_{r'}+v'')\right|. \label{eq:intersection}
\end{align}
We used $\ket{S_{r'} +t} = \frac{1}{\sqrt{s_{r'}}} \sum_{v'\in S_{r'}} \ket{-v'+t}$ in Eq.~\eqref{eq:S_r_expand}, and Eq.~\eqref{eq:intersection} follows from the fact that the condition $v''+t=v-v'+t$ is equivalent to $v=v'+v''$, where $v\in S_r$ and $v'\in S_{r'}$.

We will later show that $\left| S_r \cap (S_{r'}+v'')\right|$ is the same for any $v'' \in S_{r''}$.
Thus we have
\begin{align}
&A\ket{S_{r'}+t} \\
&= c_{r'}\ket{t} +\sum_{r''=0}^{p-1}\sum_{v''\in S_{r''}}\bra{v''+t} A \ket{S_{r'}+t}\ket{v''+t} \label{eq:ASRT_1}\\ 
&= c_{r'}\ket{t}
+\sum_{r''=0}^{p-1} \frac{1}{\sqrt{s_{r'}}} \left| S_r \cap (S_{r'}+v_0'')\right| \sum_{v''\in S_{r''}} \ket{v''+t} \label{eq:ASRt_2}\\
&= c_{r'}\ket{t} 
+\sum_{r''=0}^{p-1} c(r'',r') \ket{S_r''+t}.
\end{align}
We used $\bra{t}A\ket{S_{r'}+t} = \bra{S_{r'}+t} A \ket{t} = \delta_{r',r}\sqrt{s_1}$ in Eq.~\eqref{eq:ASRT_1}, where the first equality follows from $A$ being symmetric.
In Eq.~\eqref{eq:ASRt_2}, $v_0''\in S_{r''}$ is arbitrary.

We now show that $\left| S_r \cap (S_{r'}+v'')\right|$ is the same for any $v'' \in S_{r''}$, or equivalently, $| S_r \cap (S_{r'}+x)| = |S_r \cap (S_{r'}+z)|$ for any $x\in S_{r''}$ and $z\in S_{r''}$.
We will later construct an isometry $\tau$ of $\F_p^n$ such that $z = \tau(x)$.
As $\tau$ is a distance-preserving bijection, we have $\tau(S_r) = S_r$ and $\tau(S_{r'})=S_{r'}$, and thus $\tau(S_r\cap(S_{r'}+x)) = S_r\cap(S_{r'}+z)$, which implies $| S_r \cap (S_{r'}+x)| = |S_r \cap (S_{r'}+z)|$.
The isometry $\tau$ can be constructed by extending the isometry $\tau' : ax \mapsto az$ that maps subspace $U=\{ a x :a\in \F_p\}$ to subspace $W = \{ az : a\in \F_p \}$ (note that $U$ and $W$ are both $1$-dimensional subspace, since $x$ and $z$ are both nonzero points), to an isometry $\tau$ of the whole space $\F_p^n$ using Witt's Lemma~\cite[Section 20]{Aschbacher_2000}.
\end{proof}

\end{appendices}

\end{document}